\documentclass[letterpaper, 10 pt, conference]{ieeeconf}
\IEEEoverridecommandlockouts
\usepackage{inputenc}
\usepackage[american]{babel}
\usepackage{csquotes}
\usepackage[letterpaper,left=48pt,bottom=43pt,right=48pt,top=57pt]{geometry}

\newcommand{\D}{\mathcal{D}}

\newcommand{\R}{\mathbb{R}}

\usepackage{url}
\usepackage{hyperref}
\hypersetup{colorlinks=false,linkcolor=blue,urlcolor=blue}
\usepackage{graphicx}
\usepackage{subcaption}

\addto\captionsenglish{}
\addto\captionsamerican{}
\usepackage{multicol}
\usepackage{xspace}
\usepackage{mathtools}
\usepackage{booktabs}
\usepackage{multirow}
\usepackage{tabularx}
\newcolumntype{C}{>{\centering\arraybackslash}X}
\newcolumntype{L}{>{\raggedright\arraybackslash}X}
\usepackage{siunitx}
\usepackage{listings}
\usepackage{float}
\usepackage{bm}         
\usepackage{amsmath,amssymb,amsfonts}

\newtheorem{theorem}{Theorem}

\newtheorem{proposition}{Proposition}
\newtheorem{remark}{Remark}
\usepackage{bm}

\allowdisplaybreaks
\begin{document}
\title{Virtual Constraint for a Quadrotor UAV Enforcing a Body-Axis Pointing Direction}

\author{Alexandre Anahory Simoes,  Leonardo J. Colombo, Juan I. Giribet, Efstratios Stratoglou%
\thanks{A. Anahory Simoes is with the School of Science and Technology, IE University, Spain
        {\tt\small alexandre.anahory@ie.edu}}
\thanks{L. Colombo is with Centre for Automation and Robotics (CSIC-UPM), Ctra. M300 Campo Real, Km 0,200, Arganda
del Rey - 28500 Madrid, Spain.{\tt\small leonardo.colombo@csic.es}}%
\thanks{Juan I. Giribet is with Universidad de San Andr\'es (UdeSA) and CONICET, Argentina.
        {\tt\small jgiribet@conicet.gov.ar}}\thanks{E. Stratoglou is with American College of Thessaloniki (ACT), 17 Sevenidi St.
55535, Pylaia, Thessaloniki, Greece. {\tt\small stratogl@act.edu}}
\thanks{L. Colombo acknowledge financial support from Grant PID2022-137909NB-C21 funded by MCIN/AEI/ 10.13039/501100011033. The research leading to these results was supported in part by iRoboCity2030-CM, Robótica Inteligente para Ciudades Sostenibles (TEC-2024/TEC-62), funded by the Programas de Actividades I+D en Tecnologías en la Comunidad de Madrid. J. Giribet was supported by PICT-2019-2371 and PICT-2019-0373 projects from Agencia Nacional de Investigaciones Cient\'ificas y Tecnol\'ogicas, and UBACyT-0421BA project from the Universidad de Buenos Aires (UBA), Argentina.}%
}

\maketitle
\begin{abstract}
We propose a geometric control framework on $SE(3)$ for quadrotors that enforces pointing-driven missions without completing a full attitude reference. The mission is encoded through virtual constraints defining a task manifold and an associated set of admissible velocities, and invariance is achieved by a feedback law obtained from a linear system in selected inputs. Under a transversality condition with the effective actuation distribution, the invariance-enforcing input is uniquely defined, yielding a constructive control law and, for relevant tasks, closed-form expressions. We further derive a local off-manifold stabilization extension. As a case study, we lock a body axis to a prescribed line-of-sight direction while maintaining fixed altitude.
\end{abstract}



\section{Introduction}\label{sec:intro}
Quadrotor UAVs are increasingly deployed in missions where the \emph{direction} of an onboard body-fixed axis matters as much as, or even more than, classical position and attitude tracking. Typical examples include camera pointing and vision-based servoing for target-relative maneuvers and line-of-sight objectives~\cite{Thomas2016RAL,Sheckells2016IROS,HamelMahony2002TRO}, aerial filming where the drone/camera motion must respect pointing and visibility constraints~\cite{Sabetghadam2019ECMR}, and inspection/mapping scenarios where directional sensing and persistent visibility are central~\cite{9636824,yang2024safety}. In these settings, the controller must shape the motion so that a chosen body axis remains aligned with a desired direction while satisfying additional geometric requirements such as fixed altitude or planar flight.

While gimbals can decouple camera pointing from vehicle attitude in some platforms, they are not always available, may have limited range/bandwidth, and do not address missions where the vehicle body axis itself must satisfy geometric constraints (e.g., directional sensing, aerodynamic/propulsion-aligned instruments, or safety/visibility constraints coupled to the thrust direction). Our approach targets this latter class by enforcing mission geometry at the UAV level, without relying on full-attitude reference completion

Most existing approaches treat such objectives through trajectory tracking: one constructs a feasible reference $(p_d(t),R_d(t))\in\mathbb{R}^{3}\times SO(3)$---typically by prescribing a desired thrust direction and a yaw profile---and then applies a geometric tracking controller on $SE(3)$~\cite{Lee2010CDC,MellingerKumar2011ICRA}. 
This is effective when a consistent full attitude reference is naturally available, but it becomes less direct when the task constrains only \emph{part} of the attitude (e.g., the thrust direction or a single body axis) and must be met together with other geometric constraints. 
In practice, the remaining degrees of freedom are completed via heuristics or cascaded architectures (outer-loop position, inner-loop attitude/rate), or embedded in online planning/optimization and allocation layers~\cite{Bangura2017Thesis,RichterBryRoy2013ISRR,BryRichterBachrachRoy2015IJRR}. 
Moreover, since pointing tasks couple translation and rotation through the underactuated thrust direction, reference completion can obscure basic feasibility and well-posedness questions; for instance, common attitude parametrizations unnecessarily couple control of the thrust direction with yaw tracking, which can degrade position/pointing performance~\cite{kooijman2019trajectory}.

Even within a tracking paradigm, completing a full attitude reference can introduce undesirable couplings: many designs regulate the thrust direction on $\mathbb{S}^2$ together with yaw on $\mathbb{S}^1$, so yaw errors can leak into the thrust direction and degrade translational/pointing performance. This has motivated ``reduced-attitude'' controllers on $\mathbb{S}^2\times \mathbb{S}^1$ that explicitly decouple thrust-direction control from yaw~\cite{kooijman2019trajectory}. Here we take a different route and avoid reference completion altogether by encoding the mission as an invariant geometric object and enforcing it by construction.

We do so using \emph{virtual nonholonomic constraints} \cite{Simoes:linear:nonholonomic, stratoglou2024geometry, stratoglou2024virtualc, simoes2025geometric}. Rather than tracking a completed reference $(p_d(t),R_d(t))$, we define a \emph{task manifold} in $SE(3)$ (configurations that satisfy the task constraints) and an associated \emph{task distribution} in $TSE(3)$ (velocities that preserve them), specified via linear velocity constraints in body coordinates. Enforcing invariance amounts to differentiating these constraints along the dynamics and requiring that they remain satisfied, which yields a linear system in a selected set of control inputs. If the task distribution is transversal to the corresponding effective actuation distribution, this system is nonsingular and the invariance-enforcing input is uniquely defined, leading to a constructive control law and, in relevant cases, closed-form thrust and torque expressions.

More broadly, the proposed virtual-constraint viewpoint is most relevant when the mission specifies only a subset of the configuration/attitude variables and the remaining degrees of freedom are not directly actuated or are intentionally left free (e.g., restricted torque channels, thrust-direction coupling, or limited authority). In such cases, enforcing an invariant task geometry provides a principled alternative to completing and tracking a full reference.

Our objective is to provide a transparent alternative to reference completion for pointing-driven missions, together with a local stabilization mechanism for velocity-level task compatibility. The proposed control law makes feasibility and well-posedness explicit: whenever transversality holds, the thrust and torque components required to maintain the task are obtained from a uniquely solvable linear system. {In addition, we derive an off-manifold stabilization extension that acts directly on the velocity-level constraint residuals and recovers the invariance-enforcing law on the task distribution.} As a case study, we obtain closed-form expressions that lock a body axis to a prescribed line-of-sight direction while maintaining fixed altitude. This shifts the design focus from tracking an arbitrarily completed attitude trajectory to maintaining (and locally recovering) mission-compatible geometry, which is natural in directional sensing and line-of-sight applications where only a subset of attitude is operationally meaningful.

In particular, the main contributions of this work are: (i) We formulate pointing-driven quadrotor missions on $SE(3)$ via virtual nonholonomic constraints, by defining a task manifold in configuration space together with an associated task distribution of admissible velocities. (ii) We derive a constructive invariance-enforcing controller: the required inputs are obtained from a linear system induced by differentiating the velocity constraints, and are uniquely determined under a transversality condition with the effective actuation distribution. (iii) We derive a local off-manifold stabilization extension for the velocity-level compatibility conditions, with explicit feedback laws that yield exponential residual decay on a feasible set and recover the invariance-enforcing law on the constraint set. (iv) We apply the framework in a body-axis line-of-sight constraint  on a quadrotor combined with fixed altitude, yielding closed-form thrust and torque expressions.

The paper is structured as follows. Section~\ref{sec:virtual_constraints} introduces the virtual-constraint formulation and the invariance-enforcing control on $SE(3)$. Section~\ref{sec:application} applies the method to body-axis pointing with fixed altitude and derives explicit control expressions. {In addition, we develop a  constraint stabilization for the pointing--altitude task.} Section~\ref{sec:simulations} presents numerical simulations. Section~\ref{sec:conclusions} concludes and outlines future directions.

\section{Virtual constraints in $SE(3)$}\label{sec:virtual_constraints}

Let $SE(3)=SO(3)\times \mathbb{R}^3$ with elements $g=(R,p)$, $R\in SO(3)$, $p\in\mathbb{R}^3$.
We identify the Lie algebra $\mathfrak{se}(3)=\mathfrak{so}(3)\times \mathbb{R}^3$ and use the hat map
$\widehat{\cdot}:\mathbb{R}^3\to\mathfrak{so}(3)$ so that $\widehat{x}y=x\times y$.
A tangent vector $(\dot R,\dot p)\in T_{(R,p)}SE(3)$ admits a \emph{left-trivialized} representation
\begin{equation}\label{eq:left-trivialized-velocity}
(\dot R,\dot p) = (R\widehat{\Omega},\, v),\qquad (\Omega,v)\in\mathbb{R}^3\times\mathbb{R}^3,
\end{equation}
and we write the corresponding left-trivialized body velocity as
\[
\xi := (\Omega,v)\in\mathfrak{se}(3)\simeq \mathbb{R}^6.
\]

For convenience we denote the body axes by
\begin{equation}\label{eq:body-axes}
b_i := R e_i,\qquad i=1,2,3,
\end{equation}
and note that $\dot b_i=b_i\times \Omega$ whenever $\dot R=R\widehat{\Omega}$.

Consider the mechanical system whose configuration space is $SE(3)=SO(3)\times \R^{3}$ and models the dynamics of an underactuated aerial robot of quadrotor type, with control inputs $u = (f,\tau) \in \mathbb{R} \times \mathbb{R}^3$, where $f$ is the total thrust generated along the body-fixed $e_3$ axis and $\tau \in \mathbb{R}^3$ are the control torques. The equations of motion are given by
\begin{align}
\dot{p} &= v, \label{eq:transl_kin}\\
m \dot{v} &= f R e_3 - m g e_3, \label{eq:transl_dyn}\\
\dot{R} &= R \hat{\Omega}, \label{eq:rot_kin}\\
J \dot{\Omega} + \Omega \times J \Omega &= \tau,
\label{eq:rot_dyn}
\end{align}
where $m>0$ is the mass, $J \in \mathbb{R}^{3\times 3}$ is the inertia
matrix, $g>0$ is the gravitational constant. We assume the UAV rigid-body attitude dynamics is modeled with a diagonal inertia matrix
\begin{equation}\label{eq:J-diag}
J=\mathrm{diag}(J_1,J_2,J_3).
\end{equation}

A \textit{virtual constraint} is a submanifold $\D\subseteq TSE(3)$ that is control invariant for the closed loop system \eqref{eq:transl_kin}, \eqref{eq:transl_dyn}, \eqref{eq:rot_kin} and \eqref{eq:rot_dyn} for a suitable choice of control law. In other words, there exists an admissible feedback control law that makes $\mathcal{D}$ an invariant submanifold for the closed-loop system.

It has already been established in earlier works (see, for example \cite{Simoes:linear:nonholonomic, stratoglou2024geometry}) that if $\D$ is transversal to the actuated directions (which are at most $4$, implying that $\D$ must have dimension $6+k$ with $2 \leqslant k \leqslant 5$) then there will be a unique control law enforcing the virtual constraint. In our main theoretical result, we will derive an expression for the control law achieving our desired task.

\subsection{Constraint distribution}
Fix an integer $m\in\{1,2,3,4\}$. We consider $m$ (smooth) \emph{constraint functions}
\begin{equation*}
\mu^a :T(SE(3))\to \R,\qquad a=1,\dots,m,
\end{equation*}
which are linear on the velocities and define the associated \emph{constraint (nonholonomic) distribution}
\begin{equation*}
\D_{g} := \{(g,\Omega, v)\in TSE(3) \ | \ \mu^{a}(g,\Omega, v)=0, \ a=1,\ldots, m\}.
\end{equation*}
At each $g\in SE(3)$, we will assume that the fiber $\D_g\subset T_g SE(3)$ is a linear subspace of codimension $m$. The functions $\mu^{a}$ are $m$ linearly independent one-forms. Hence $\D$ has constant rank $\dim \D_{g} = 6-m$ on $\mathcal{U}$.

Using the left trivialization \eqref{eq:left-trivialized-velocity}, each constraint one-form can be written as
\begin{equation}\label{eq:alpha-body-form}
\mu^a_{(R,p)}(\dot R,\dot p)= C^{a}(R,p) \cdot \xi = A^a(R,p)\cdot \Omega + B^a(R,p)\cdot v,
\end{equation}
for uniquely defined coefficient maps $A^a,B^a:SE(3)\to\mathbb{R}^3$.

The problem in the next section addresses the special case where the distribution $\D$ is \emph{integrable}. A simple test to prove integrability is to check if it is closed under Lie brackets, i.e., for all smooth vector fields $X,Y$ with $X(g),Y(g)\in\D_g$ one has $[X,Y](g)\in\D_g$.
When $\D$ is integrable, through each point $g\in\mathcal{U}$ there exists a maximal immersed submanifold
$\mathcal{M}\subset SE(3)$ whose tangent bundle satisfies
\[
T_g\mathcal{M}=\D_g,\qquad \forall g\in\mathcal{M},
\]
and we may interpret $\mathcal{M}$ as the configuration constraint manifold. Intuitively, the constraint on velocities are derived from constraints on positions.

A sufficient condition for integrability in the present setting is that the $\mu^a$ arise as differentials of $m$ independent smooth functions
\begin{equation}\label{eq:holonomic-constraints}
\phi^a:SE(3)\to\mathbb{R},\qquad \mu^a = d\phi^a,\qquad a=1,\dots,m,
\end{equation}
with $d\phi^{a}$ linearly independent. Then
\begin{equation}\label{eq:submanifold-levelset}
\mathcal{M}:=\{g\in SE(3):\ \phi^a(g)=0,\ a=1,\dots,m\}
\end{equation}
is a submanifold of codimension $m$ and $T\mathcal{M}=\D$.

\begin{remark}
    Previous papers introduced the notion of virtual holonomic constraints (see \cite{maggiore2012virtual} for instance), which are virtual constraints arising from restricting the admissible configurations only and not the velocities. However, to make sure that a dynamical system does not violate a given set of admissible configurations, one has also to enforce that its velocity is compatible with the admissible configurations. This set of admissible velocities forms in turn a distribution $\D$, like the one defined above. So, in some sense, there is no essential difference between the concept of virtual holonomic constraint or virtual nonholonomic constraint, under the understanding that a restriction on velocities may also arise from a restriction on the configurations, which corresponds to the case when $\D$ is integrable. 
\end{remark}

\subsection{The control distributions}
\label{subsec:affine_xi_level}

The UAV dynamics is \emph{second order} on $SE(3)$, the controls $(f,\tau)$ do not enter affinely at the configuration level $\dot g\in T_gSE(3)$, but rather in the dynamics of the velocities.


Projecting equations \eqref{eq:transl_kin}, \eqref{eq:transl_dyn}, \eqref{eq:rot_kin} and \eqref{eq:rot_dyn} to the velocity space $\xi=(\Omega, v)\in \R^{6}$ we obtain
\begin{equation}\label{eq:xi_affine}
\dot\xi = \underbrace{\begin{bmatrix} -J^{-1}(\Omega\times J\Omega)  \\[0.5mm]-g e_3\end{bmatrix}}_{=:a_0(\xi)}
\;+\;
f\underbrace{\begin{bmatrix}0  \\[0.5mm] \frac{1}{m}R e_3\end{bmatrix}}_{=:a_f(g)}
\;+\;
\sum_{i=1}^3 \tau_i\underbrace{\begin{bmatrix} J^{-1}e_i \\[0.5mm]0\end{bmatrix}}_{=:a_{\tau_i}},
\end{equation}
where we emphasize that the thrust direction $a_f$ depends on $R$ (hence on $g$), while the torque directions
$a_{\tau_i}$ are constant in these coordinates.

Define the \emph{total control distribution} as the tangent subspace spanned by the actuated directions
\begin{equation}\label{eq:F_xi_def}
\mathcal{F}_{g} := \mathrm{span}\Big\{a_f(g),\, a_{\tau_1},\,a_{\tau_2},\,a_{\tau_3}\Big\}
\ \subset\ \mathbb{R}^6.
\end{equation}
If, for example $\tau_1\equiv 0$, we can also define an \textit{effective control distribution} given by
\begin{equation}\label{eq:F_xi_tau1zero}
\mathcal{F}_{g}^{\mathcal{I}}=\mathrm{span}\big\{a_f(g),\,a_{\tau_2},\,a_{\tau_3}\big\}.
\end{equation}



\subsection{Invariant controller}
\label{controller:section}

The following proposition is an immediate consequence of a more general theorem proved in \cite{Simoes:linear:nonholonomic}. We will say that two distributions $\mathcal{A}_{1}$ and $\mathcal{A}_{2}$ on a manifold $Q$ are said to be transversal if they are complementary, in the sense that $TQ=\mathcal{A}_{1}\oplus\mathcal{A}_{2}$.

\begin{proposition}\label{existence:result}
If the distribution $\mathcal{D}$ and the effective control distribution $\mathcal{F}^{\mathcal{I}}$ are transversal, then there exists a unique control function making the distribution a virtual constraint associated with the mechanical control system defined by equations \eqref{eq:transl_kin}, \eqref{eq:transl_dyn}, \eqref{eq:rot_kin} and \eqref{eq:rot_dyn}.
\end{proposition}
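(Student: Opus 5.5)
The plan is to reduce invariance of $\mathcal{D}$ to a pointwise linear algebraic condition on the selected inputs, and then read off existence and uniqueness from transversality. Since $\mathcal{D}$ is cut out by the $m$ independent constraint functions $\mu^a(g,\xi)=C^a(g)\cdot\xi$ of \eqref{eq:alpha-body-form}, a closed-loop trajectory starting in $\mathcal{D}$ remains in $\mathcal{D}$ if and only if $\frac{d}{dt}\mu^a(g(t),\xi(t))=0$ whenever $\mu^a=0$, for $a=1,\dots,m$. This is the standard characterization of invariance of a level set with independent defining functions: the closed-loop vector field is tangent to $\mathcal{D}$ exactly when it annihilates each $d\mu^a$ on $\mathcal{D}$.

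First, we differentiate $\mu^a$ along the dynamics. The configuration variables evolve by $\dot g=(R\widehat\Omega,v)$, and $\xi$ evolves by \eqref{eq:xi_affine}. With the inputs not in $\mathcal{I}$ set to zero (for instance $\tau_1\equiv0$), we obtain
\begin{equation*}
\frac{d}{dt}\mu^a = \big(dC^a(g)[\dot g]\big)\cdot\xi + C^a(g)\cdot a_0(\xi) + \sum_{j\in\mathcal{I}} u_j\, C^a(g)\cdot a_j(g).
\end{equation*}
Setting this to zero for all $a$ gives a linear system $M(g)\,u_{\mathcal{I}} = -b(g,\xi)$. Here $M_{aj}=\mu^a(a_j)$ is an $m\times|\mathcal{I}|$ matrix, and $b$ collects the drift terms, which are smooth in $(g,\xi)$.

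Second, and this is the key step, we show that transversality forces $M(g)$ to be square and invertible. Transversality is interpreted on the velocity fiber $\mathbb{R}^6$ as $\mathbb{R}^6=\mathcal{D}_g\oplus\mathcal{F}^{\mathcal{I}}_g$. This gives the dimension count $|\mathcal{I}|=6-\dim\mathcal{D}_g=m$, so $M$ is square. To see that $M$ is injective, suppose $M(g)w=0$. Then $\sum_j w_j a_j(g)$ lies in $\ker\mu^a$ for all $a$, hence in $\mathcal{D}_g\cap\mathcal{F}^{\mathcal{I}}_g=\{0\}$. The $a_j$ span $\mathcal{F}^{\mathcal{I}}_g$ and are $m$ in number, so they are linearly independent, which forces $w=0$.

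Third, the unique solution is $u_{\mathcal{I}}=-M(g)^{-1}b(g,\xi)$. It is smooth, so the closed-loop field is smooth, tangent to $\mathcal{D}$, and $\mathcal{D}$ is invariant. Uniqueness on $\mathcal{D}$ follows because any invariance-enforcing feedback must satisfy the same nonsingular linear system at every point of $\mathcal{D}$. If $\mathcal{D}$ is integrable with $\mu^a=d\phi^a$, invariance of $\mathcal{D}$ additionally keeps $\phi^a$ constant, so $\mathcal{M}$ is preserved.

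The main obstacle I expect is the interpretive step behind the dimension count. $\mathcal{D}$ lives in $TSE(3)$, whereas $\mathcal{F}^{\mathcal{I}}$ consists of vertical directions in the velocity fiber, so "transversal" has to be read fiberwise as complementarity of the subspace $\mathcal{D}_g$ and $\mathcal{F}^{\mathcal{I}}_g$ inside $\mathbb{R}^6$. Equivalently, we need the rank condition that $\mu^a(a_j)$ is nonsingular. I would state this identification explicitly and otherwise cite the general theorem of \cite{Simoes:linear:nonholonomic} for the vertical-lift formulation.
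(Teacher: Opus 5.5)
Your proposal is correct and essentially reproduces the paper's argument. The paper defers existence and uniqueness to the general theorem of \cite{Simoes:linear:nonholonomic} and then, as you do, differentiates $C^a(R,p)\cdot\xi$ along \eqref{eq:xi_affine} to obtain an $m\times m$ linear system in $(f,\tau_{i_1},\dots,\tau_{i_{m-1}})$, which it says is invertible under transversality; your injectivity argument via $\mathcal{D}_g\cap\mathcal{F}^{\mathcal{I}}_g=\{0\}$ supplies the step the paper only asserts.

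One small wrinkle: transversality by itself only gives $\dim\mathcal{F}^{\mathcal{I}}_g=m$. You should take the number of selected inputs to be $m$ from the setup ($u\in\mathbb{R}^m$), or from the evident independence of $a_f$ (purely translational) and $a_{\tau_i}=J^{-1}e_i$ (purely rotational). As written, you deduce that count from transversality and then use it to infer linear independence, which is circular.
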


The transversality assumption is a geometric condition that is equivalent to a certain regularity assumption on the constraint equations, allowing to solve them for a unique control law ensuring their invariance.

We will derive next a formula for the unique control law under the additional hypothesis that the thrust direction $a_{f}$ belongs to the effective control distribution. Let $m\in\{1,2,3, 4\}$ denote the total number of control inputs. In the UAV model these inputs are the scalar thrust $f$ and $m-1$ selected body torques $\tau_i$ with indices $\mathcal{I}=\{i_1,\dots,i_{m-1}\}\subseteq\{1,2,3\}$,  $u:=\big(f,\tau_{i_1},\ldots,\tau_{i_{m-1}}\big)^\top\in\mathbb{R}^m$.

Let $\mathcal{D}\subset TSE(3)$ be a smooth constant-rank distribution of rank $6-m$ and transversal to the effective control distribution $\mathcal{F}_{g}^{\mathcal{I}}$. The requirement that $\dot g\in\mathcal{D}_g$ is the \emph{velocity-level} condition
\begin{equation}\label{eq:vel_level_condition}
\mu^a_g(\dot g)=0,\qquad a=1,\dots,m,
\end{equation}
which, after substituting $\dot g=(R\widehat{\Omega},v)$, imposes $m$ algebraic relations among $(g,\xi)$ of the form \eqref{eq:alpha-body-form}.

To guarantee \emph{invariance} of the constraint (i.e.\ tangency of the closed-loop vector field to the constraint set),
one must enforce
\begin{equation}\label{eq:inv_requirement}
\frac{d}{dt}\Big(C^{a}(R,p) \cdot \xi\Big)=0,\qquad a=1,\dots,m,
\end{equation}
along solutions that satisfy \eqref{eq:vel_level_condition}. This is equivalent to
\begin{equation}
\Big(dC^{a}(R,p) \cdot (\dot{R}, \dot{p})\Big) \cdot \xi + C^{a}(R,p) \cdot \dot{\xi}=0,
\end{equation}
and using \eqref{eq:xi_affine} we obtain
\begin{equation*}
\Big(dC^{a}(R,p) \cdot (\dot{R}, \dot{p})\Big) \cdot \xi + C^{a}(R,p) \cdot \Big( a_{0} + f a_{f} + \tau_{i} a_{\tau_{i}}\Big)=0.
\end{equation*}
Introduce the $m$ functions $G^{a}$ absorbing all the terms that do not contain the controls
$$G^{a} = \Big(dC^{a}(R,p) \cdot (\dot{R}, \dot{p})\Big) \cdot \xi + C^{a}(R,p) \cdot a_{0}.$$
Then we obtain the linear system of equations with $m$ equations and $m$ unknowns
$$G^{a} +  f C^{a}(R,p)\cdot a_{f} + \tau_{i} C^{a}(R,p)\cdot a_{\tau_{i}}=0,$$
which has a unique solution under the transversality assumption and can be obtained by inverting the matrix whose columns are $C^{a}(R,p)\cdot a_{f}$, $C^{a}(R,p)\cdot a_{\tau_{i}}$.

%



These controllers are unique by Proposition \ref{existence:result}.

\section{Virtual Constraint Enforcing a Body-Axis Pointing Direction}\label{sec:application}

Consider the quadrotor UAV model presented in the previous section. The description of the configuration is given by $(R,p)\in SE(3)$ with $R\in SO(3)$ and $p\in\mathbb{R}^3$, and the body-frame axes were $b_i := R e_i,\, i=1,2,3,$ and let $\rho := \|p\|>0$, and $\hat{p} := \frac{p}{\rho}$. We first consider the \textit{feasible submanifold}
$$\mathcal{M}_{0} := \{(R,p)\in SE(3):\ e_3^\top R e_3 \neq 0, \quad p\neq 0\},$$
where our controller will be well-defined. Next, we consider the constraint submanifold
\begin{equation}\label{eq:M-def}
\mathcal{M} := \{(R,p)\in \mathcal{M}_{0}:\ b_1 = -\hat{p}\},
\end{equation}
describing a locked direction and add a fixed-height constraint $e_3^\top p=z_0$ to define the submanifold
\begin{equation}\label{eq:M1-def}
\mathcal{M}_1 := \{(R,p)\in \mathcal{M}:\ b_1 = -\hat{p},\ \ e_3^\top p=z_0\}.
\end{equation}
We assume the UAV dynamics on $TSE(3)$ are given (as standard) and we restrict attention to the controlled directions produced by thrust $f$ and torque $\tau\in\mathbb{R}^3$, with the additional assumption $\tau_1=0$.

\subsection{Constraints in body coordinates}

On $\mathcal{M}$ the attitude constraint $b_1=-\hat{p}$ says the first body axis points along the \emph{negative} radial direction. Differentiating
\[
b_1 + \hat{p} = 0
\]
yields the velocity-level compatibility condition
\begin{equation}\label{eq:vel-level}
\dot b_1 + \dot{\hat{p}} = 0.
\end{equation}
Using $\dot b_1 = (R \Omega)\times b_1$ (since $\dot R = R\widehat\Omega$) and the standard identity
\[
\dot{\hat{p}}  = \frac{1}{\rho}(I-\hat{p}\hat{p}^\top)\dot p = \frac{1}{\rho}(I-\hat{p}\hat{p}^\top)v,
\]
equation \eqref{eq:vel-level} becomes
\begin{equation}\label{eq:vel-level-expanded}
(R \Omega)\times b_1 + \frac{1}{\rho}(I-\hat{p}\hat{p}^\top)v=0.
\end{equation}
On $\mathcal{M}$ we have $\hat{p}=-b_1$, hence $I-\hat{p}\hat{p}^\top = I-b_1b_1^\top$ is the orthogonal projector onto $\mathrm{span}\{b_2,b_3\}$. Writing
\[
v=v_1 b_1+v_2 b_2+v_3 b_3,\qquad v_i:=b_i^\top v,
\]
and $R\Omega=\Omega_1 b_1+\Omega_2 b_2+\Omega_3 b_3$ (equivalently $\Omega_i=e_i^\top \Omega$), we compute
\[
(R \Omega)\times b_1 = \Omega_3 b_2 - \Omega_2 b_3,\qquad (I-b_1b_1^\top)v=v_2 b_2+v_3 b_3.
\]
Therefore \eqref{eq:vel-level-expanded} is equivalent to the two scalar constraints
\begin{equation}\label{eq:Omega23-constraints}
\Omega_3 = \frac{v_2}{\rho},\qquad \Omega_2 = -\frac{v_3}{\rho}.
\end{equation}

The height constraint $e_3^\top p=z_0$ gives the additional velocity-level condition
\begin{equation}\label{eq:height-vel}
e_3^\top v=0.
\end{equation}

Thus, equations \eqref{eq:alpha-body-form} are three constraint equations taking the form
\begin{equation*}
    \begin{split}
        \mu^{1} & = e_3^\top v \\
        \mu^{2} & = e_3^\top \Omega - \frac{1}{\rho}b_{2}^\top v \\
        \mu^{3} & = e_2^\top \Omega + \frac{1}{\rho}b_{3}^\top v.
    \end{split}
\end{equation*}
Thus the three effective control directions are enough to ensure a unique control law enforcing the desired constraint, provided that they fulfill the transversality assumption.

\subsection{Transversality}

The tangent space at $(R,p)\in\mathcal{M}_1$ is the set of vectors $(\dot{R},\dot{p})=(R\widehat\Omega,\,v)$ satisfying the linearized constraints \eqref{eq:Omega23-constraints} and \eqref{eq:height-vel}. Thus $T_{(R,p)}\mathcal{M}_1$ is a codimension-$3$ subspace of $T_{(R,p)}SE(3)$. The control directions on configuration induced by $f$ and $\tau$ are given by the effective control distribution \eqref{eq:F_xi_tau1zero}.

At each $(R,p)\in\mathcal{M}_1$ the three control directions are linearly independent and satisfy
\[
\mathcal{F}_{(R,p)}^{\mathcal I}\ \cap\ T_{(R,p)}\mathcal{M}_1=\{0\},
\]
for $p\neq 0$ and $e_3^\top R e_3 \neq 0$, i.e., if the body axis $b_{3}$ is orthogonal to the inertial axis $e_{3}$, the transversality condition is violated and the virtual constraint is not feasible. Under these conditions, they provide a complement to $T_{(R,p)}\mathcal{M}_1$ inside $T_{(R,p)}SE(3)$.

Indeed, independence is immediate since $a_f$ has nonzero only translational part while $a_{\tau_{2}},a_{\tau_{3}}$ have only rotational parts, and the latter correspond to independent generators of rotations. In addition, none of these vectors belongs to \(T_{(R,p)}M_1\) if $e_{3}^\top b_{3} \neq 0$, that is, provided we remain in the feasibility set $\mathcal{M}_{0}$, the two distributions are transversal. Intuitively, this translates in the fact that we are not able to enforce the chosen constraints when the body axis $b_{3}$, directed along the thrust, is orthogonal to the inertial vertical axis. In case $e_{3}^\top b_{3} = 0$, the direction span by $(\Omega, v)=(e_{2}, b_{3})$ is in the intersection. 

\subsection{Virtual constraint}

We now derive feedback laws for $(f,\tau_2,\tau_3)$ ensuring that $\mathcal{M}_1$ is invariant for the closed-loop dynamics, assuming the system starts on $\mathcal{M}_1$ and satisfies the velocity-level compatibility conditions \eqref{eq:Omega23-constraints}--\eqref{eq:height-vel}.

\begin{theorem}
    Suppose that the control system \eqref{eq:transl_kin}-\eqref{eq:rot_dyn} is actuated by $(f,\tau_2,\tau_3)$. Then, the thrust controller
    $$f=\frac{mg}{e_3^\top R e_3},$$
    and the torque controllers given by
    
\begin{align}
\tau_2 &= J_2\Big(\Omega_1\Omega_3 + \frac{1}{\rho}\Big(-\frac{f}{m}+g s_3\Big)\Big)+(J_1-J_3)\Omega_1\Omega_3,\nonumber\\
\tau_3 &= J_3\Big(-\Omega_1\Omega_2 - \frac{g}{\rho}s_2\Big)+(J_2-J_1)\Omega_1\Omega_2,\label{eq:tau23-laws}
\end{align} are the unique controllers ensuring that the submanifold \eqref{eq:M1-def} is a virtual constraint for this control system, where
    \[
        s_2=e_3^\top b_2=e_3^\top R e_2,\qquad s_3=e_3^\top b_3=e_3^\top R e_3.
    \]
\end{theorem}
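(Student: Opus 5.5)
The plan is to apply the general construction of Section~\ref{controller:section} to the three constraint functions $\mu^1,\mu^2,\mu^3$ derived above, with $u=(f,\tau_2,\tau_3)$ and $\tau_1\equiv 0$. I would differentiate each $\mu^a$ along the dynamics \eqref{eq:transl_kin}--\eqref{eq:rot_dyn}, simplify the result on the constraint set using \eqref{eq:Omega23-constraints}--\eqref{eq:height-vel}, and solve the resulting linear system $G^a+fC^a\cdot a_f+\tau_iC^a\cdot a_{\tau_i}=0$. Uniqueness comes directly from Proposition~\ref{existence:result}. The transversality of $\mathcal{D}=T\mathcal{M}_1$ and $\mathcal{F}^{\mathcal I}$ on $\mathcal{M}_0$ was established in the preceding subsection, so the system is nonsingular whenever $s_3\neq 0$.

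\textbf{Thrust.} The first equation decouples. From \eqref{eq:transl_dyn},
\[
\frac{d}{dt}(e_3^\top v)=\frac{f}{m}\,e_3^\top b_3-g .
\]
This involves only $f$, so $f=mg/s_3$, which requires $s_3\neq0$, i.e.\ that we stay in $\mathcal{M}_0$.

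\textbf{Torque $\tau_3$.} For $\mu^2=\Omega_3-v_2/\rho$ I will use three facts:
\begin{itemize}
\item $\dot b_2=(R\Omega)\times b_2=\Omega_1b_3-\Omega_3b_1$;
\item $\dot\rho=\hat p^\top v=-v_1$ on $\mathcal{M}$;
\item $b_2^\top\dot v=-g s_2$, because $b_2\perp b_3$ removes the thrust.
\end{itemize}
These give
\[
\frac{d}{dt}\frac{v_2}{\rho}=\frac{\Omega_1v_3-\Omega_3v_1-gs_2}{\rho}+\frac{v_1v_2}{\rho^2}.
\]
Substituting the constraints $v_2=\rho\Omega_3$ and $v_3=-\rho\Omega_2$, the $v_1$ terms cancel and this reduces to $-\Omega_1\Omega_2-gs_2/\rho$. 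The third Euler equation, $J_3\dot\Omega_3=\tau_3-(J_2-J_1)\Omega_1\Omega_2$, then yields the stated formula for $\tau_3$.

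\textbf{Torque $\tau_2$.} The computation for $\mu^3=\Omega_2+v_3/\rho$ is analogous. Here $\dot b_3=\Omega_2b_1-\Omega_1b_2$ and $b_3^\top\dot v=f/m-gs_3$, so the thrust now enters. After the same cancellation, $\frac{d}{dt}(v_3/\rho)$ reduces to $-\Omega_1\Omega_3+\frac1\rho(f/m-gs_3)$. Using $J_2\dot\Omega_2=\tau_2-(J_3-J_1)\Omega_1\Omega_3$ then gives $\tau_2$ with $f$ already fixed by the first equation. Because the system is triangular ($f$ first, then $\tau_2,\tau_3$), solving it is explicit.

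\textbf{Conclusion and main difficulty.} Invariance follows because with these inputs $\dot\mu^a=0$ on $\{\mu^a=0\}$, and the derivatives of the holonomic constraints $b_1+\hat p$ and $e_3^\top p-z_0$ are exactly the $\mu^a$. Trajectories starting in $\mathcal{D}$ over $\mathcal{M}_1$ therefore remain there. The main obstacle is the bookkeeping of the frame derivatives $\dot b_i$ and of the term $\dot\rho$, and verifying that the $v_1$ contributions cancel exactly on the constraint set. I would check this carefully, also taking care with the sign convention for $\dot b_i$, since any leftover term would change the torque laws.
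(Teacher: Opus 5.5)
The computations you chain together are the paper's: $f$ from $\dot\mu^1=0$, then $\tau_3,\tau_2$ from the two pointing residuals using $\dot\rho=-v_1$, $\dot b_i=(R\Omega)\times b_i$ and Euler's equations. The gap is the closing step you flagged as the main obstacle, and the paper's proof has the same gap, because both of you take $\mu^2,\mu^3$ from \eqref{eq:Omega23-constraints}.

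\textbf{The pointing residuals have the wrong signs.} The derivative of $b_1+\hat p$ is not given by these $\mu^a$. On $\mathcal{M}$ we have $\dot b_1=\Omega_3b_2-\Omega_2b_3$ and $\dot{\hat p}=\rho^{-1}(v_2b_2+v_3b_3)$, so
\[
\frac{d}{dt}\big(b_1+\hat p\big)=\Big(\Omega_3+\frac{v_2}{\rho}\Big)b_2+\Big(\frac{v_3}{\rho}-\Omega_2\Big)b_3 .
\]
Tangency to $\mathcal{M}$ therefore requires $\Omega_3=-v_2/\rho$ and $\Omega_2=v_3/\rho$, the opposite signs. As a check, take $b_3=e_3$ and let the vehicle circle the target counterclockwise at speed $u$. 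Then $v_2=-u$, while the line of sight turns at $+u/\rho$ about $b_3$.

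\textbf{Consequence for your proof.} Your computation only shows that the stated inputs give $\dot\mu^a=0$ at points of $\mathcal{M}_1$ where the $\mu^a$ as written vanish. A velocity satisfying those $\mu^a=0$ is tangent to $\mathcal{M}_1$ only if $v_2=v_3=0$, so invariance of $\mathcal{M}_1$ does not follow.

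\textbf{Corrected accelerations.} Redoing your own computation with the correct constraints, the $v_1$ terms double rather than cancel (the polar-coordinate Coriolis term):
\[
\dot\Omega_3=-\Omega_1\Omega_2+\frac{2v_1\Omega_3+g s_2}{\rho},\qquad
\dot\Omega_2=\Omega_1\Omega_3+\frac{2v_1\Omega_2}{\rho}+\frac{1}{\rho}\Big(\frac{f}{m}-g s_3\Big).
\]
So the stated $\tau_2,\tau_3$ do not enforce $\mathcal{M}_1$. For a concrete failure, start at rest on $\mathcal{M}_1$ with a roll about $b_1$ giving $s_2\neq 0$; this point lies in $T\mathcal{M}_1$ under either sign convention. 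The stated $\tau_3$ yields $b_2^\top\frac{d^2}{dt^2}(b_1+\hat p)=-2gs_2/\rho\neq 0$ at $t=0$, so the trajectory leaves $\mathcal{M}_1$. The thrust law $f=mg/s_3$ and the triangular solve-and-uniqueness structure survive. Only the right-hand sides for $\dot\Omega_2,\dot\Omega_3$, and hence the torque formulas, must change.

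\textbf{Separate slip in your $\tau_2$ step.} The second component of $\Omega\times J\Omega$ is $\Omega_3J_1\Omega_1-\Omega_1J_3\Omega_3=(J_1-J_3)\Omega_1\Omega_3$. Hence $J_2\dot\Omega_2=\tau_2-(J_1-J_3)\Omega_1\Omega_3$, not $\tau_2-(J_3-J_1)\Omega_1\Omega_3$. With your version the gyroscopic term in $\tau_2$ comes out as $(J_3-J_1)\Omega_1\Omega_3$, which does not reproduce the stated formula. Your $\tau_3$ line is correct.
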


\begin{proof}
From $e_3^\top p=z_0$ we require $e_3^\top v=0$ and
\[
0=\frac{d}{dt}(e_3^\top v)=e_3^\top \dot v.
\]
Using the translational dynamics $m\dot v=f b_3-mg e_3$ yields
\[
0=e_3^\top\dot v=\frac{f}{m}e_3^\top b_3-g,
\]
hence, provided $e_3^\top b_3\neq 0$,
\begin{equation}\label{eq:f-law}
{\quad f=\frac{mg}{e_3^\top b_3}=\frac{mg}{e_3^\top R e_3}.\quad}
\end{equation}

To keep $b_1=-\hat{p}$ invariant, it suffices to enforce that the velocity-level constraints \eqref{eq:Omega23-constraints} are preserved by the dynamics. Differentiate \eqref{eq:Omega23-constraints}:
\begin{equation}\label{eq:diff-Omega23}
\dot\Omega_3 = \frac{d}{dt}\Big(\frac{v_2}{\rho}\Big),\qquad
\dot\Omega_2 = -\frac{d}{dt}\Big(\frac{v_3}{\rho}\Big).
\end{equation}
We compute the required right-hand sides.

First, note that $\dot\rho = \hat{p}^\top \dot p = \hat{p}^\top v = -b_1^\top v = -v_1$, hence
\begin{equation}\label{eq:rho-dot}
\dot\rho = -v_1.
\end{equation}
Next, since $v_i=b_i^\top v$,
\[
\dot v_i = \dot b_i^\top v + b_i^\top \dot v,\qquad \dot b_i=(R \Omega)\times b_1.
\]

Using $\dot v=\frac{f}{m}b_3-ge_3$ we obtain:
\begin{align*}
\dot v_2 &= ((R \Omega)\times b_2)^\top v + b_2^\top \Big(\frac{f}{m}b_3-ge_3\Big)\\
&= ((R \Omega)\times b_2)^\top v - g\, s_2,\\
\dot v_3 &= ((R \Omega)\times b_3)^\top v + b_3^\top \Big(\frac{f}{m}b_3-ge_3\Big)\\
&= ((R \Omega)\times b_3)^\top v + \frac{f}{m}- g\, s_3.
\end{align*}
A direct expansion in the body basis gives
\[
((R \Omega)\times b_2)^\top v =- v_1\Omega_3 + v_3\Omega_1,\quad\]
\[
((R \Omega)\times b_3)^\top v = v_1\Omega_2 - v_2\Omega_1.
\]
Substituting the on-manifold relations $v_2=\rho\Omega_3$ and $v_3=-\rho\Omega_2$ from \eqref{eq:Omega23-constraints} yields
\begin{align}
\dot v_2 &= -v_1\Omega_3 - \rho\Omega_2\Omega_1 - g s_2,\label{eq:vdot2-simpl}\\
\dot v_3 &= v_1\Omega_2 - \rho\Omega_3\Omega_1 + \frac{f}{m}- g s_3.\label{eq:vdot3-simpl}
\end{align}
Combining \eqref{eq:diff-Omega23}--\eqref{eq:rho-dot} with \eqref{eq:vdot2-simpl}--\eqref{eq:vdot3-simpl} gives the required accelerations:
\begin{equation}\label{eq:Omega23dot-required}
{\begin{aligned}
\dot\Omega_3 &= -\Omega_1\Omega_2 - \frac{g}{\rho}s_2,\\[1mm]
\dot\Omega_2 &= \Omega_1\Omega_3 + \frac{1}{\rho}\Big(-\frac{f}{m}+g s_3\Big).
\end{aligned}}
\end{equation}

Assume $J=\mathrm{diag}(J_1,J_2,J_3)$. Then
\[
J\dot\Omega+\Omega\times J\Omega=\tau,
\]
and the second and third components satisfy
\begin{align*}
\dot\Omega_2&=\frac{1}{J_2}\Big(\tau_2-(J_1-J_3)\Omega_1\Omega_3\Big),\\
\dot\Omega_3&=\frac{1}{J_3}\Big(\tau_3-(J_2-J_1)\Omega_1\Omega_2\Big).
\end{align*}
Imposing \eqref{eq:Omega23dot-required} gives the feedback laws
\begin{equation}
{\begin{aligned}
\tau_2 &= J_2\Big(\Omega_1\Omega_3 + \frac{1}{\rho}\Big(-\frac{f}{m}+g s_3\Big)\Big)+(J_1-J_3)\Omega_1\Omega_3,\\
\tau_3 &= J_3\Big(-\Omega_1\Omega_2 - \frac{g}{\rho}s_2\Big)+(J_2-J_1)\Omega_1\Omega_2,
\end{aligned}}
\end{equation}
together with $\tau_1=0$.

Finally, substituting the thrust law \eqref{eq:f-law} into \eqref{eq:tau23-laws} yields a closed form in terms of $(R,p,v,\Omega)$ only. 
\end{proof}


\subsection{Constraint stabilization for the pointing--altitude task}
\label{subsec:stabilization_pointing_altitude}

The control law derived in the previous subsection enforces exact invariance of the task distribution when the initial condition is compatible with the velocity constraints. We now derive an off-manifold stabilization extension for the same quadrotor task, inspired by the general virtual-constraint stabilization viewpoint in~\cite{simoes2025geometric, simoes2025geometric-b}. The goal is to drive the velocity-level constraint residuals to zero exponentially while recovering the invariance-enforcing law on the constraint set.

Rewrite the constraint equations as
\begin{equation}
\mu^1 := e_3^\top v,
\qquad
\mu^2 := \Omega_3 - \frac{v_2}{\rho},
\qquad
\mu^3 := \Omega_2 + \frac{v_3}{\rho}.
\label{eq:mu_residuals}
\end{equation}
$\mu^i$ are not only equations defining the constraints but also \textit{residuals} because they measure the instantaneous violation of the velocity-level compatibility equations defining the task distribution.



Differentiating \eqref{eq:mu_residuals} and using \eqref{eq:vdot2-simpl}, \eqref{eq:vdot3-simpl} and Euler's equations \eqref{eq:rot_dyn} yields
\begin{align}
\dot \mu^1
&= e_3^\top \dot v
= \frac{f}{m}\,e_3^\top b_3 - g .
\label{eq:mu3dot}
\\[0.5em]
\dot \mu^2
&= \dot \Omega_3 - \frac{\dot v_2}{\rho} + \frac{v_2\dot \rho}{\rho^2} \nonumber\\
&= \frac{J_1-J_2}{J_3}\,\Omega_1\Omega_2 + \frac{\tau_3}{J_3}
- \frac{\Omega_1 v_3 - \Omega_3 v_1 - g\,s_2}{\rho}\nonumber
\\&+ \frac{v_2\dot \rho}{\rho^2},
\label{eq:mu1dot}
\\[0.5em]
\dot \mu^3
&= \dot \Omega_2 + \frac{\dot v_3}{\rho} - \frac{v_3\dot \rho}{\rho^2} \nonumber\\
&= \frac{J_3-J_1}{J_2}\,\Omega_3\Omega_1 + \frac{\tau_2}{J_2}
+ \frac{\Omega_2 v_1 - \Omega_1 v_2 + \frac{f}{m} - g\,s_3}{\rho}\nonumber
\\&- \frac{v_3\dot \rho}{\rho^2},
\label{eq:mu2dot}
\end{align}

Choose gains $k_1,k_2,k_3>0$ and impose the first-order target dynamics
\begin{equation}
\dot \mu_i = -k_i \mu_i, \qquad i=1,2,3.
\label{eq:target_mu_dynamics}
\end{equation}
Substituting \eqref{eq:mu3dot}--\eqref{eq:mu2dot} into \eqref{eq:target_mu_dynamics} yields a linear system in the selected inputs $(f,\tau_2,\tau_3)$, which can be solved explicitly.

First, from \eqref{eq:mu3dot}:
\begin{equation}
f
=
m\,\frac{g-k_3\mu_3}{e_3^\top b_3},
\qquad e_3^\top b_3\neq 0.
\label{eq:f_stab}
\end{equation}

Next, from \eqref{eq:mu2dot} and using $\dot{\rho}=\hat{p}^{\top}v$:
\begin{multline}
    \tau_2
=
J_2\!\left[
-k_3\mu_3
-\frac{J_3-J_1}{J_2}\Omega_3\Omega_1 \right. \\
\left. -\frac{\Omega_2 v_1 - \Omega_1 v_2 + \frac{f}{m} - g\,s_3}{\rho}
+\frac{v_3 \; \hat{p}^{\top}v}{\rho^2}
\right].
\label{eq:tau2_stab}
\end{multline}


Finally, from \eqref{eq:mu1dot}:
\begin{multline}
\tau_3
=
J_3\!\left[
-k_2\mu_2
-\frac{J_1-J_2}{J_3}\Omega_1\Omega_2 \right.
\\ \left. +\frac{\Omega_1 v_3 - \Omega_3 v_1 - g\, s_2}{\rho}
-\frac{v_2 \; \hat{p}^{\top}v}{\rho^2}
\right].
\label{eq:tau3_stab}
\end{multline}


Equations \eqref{eq:f_stab}--\eqref{eq:tau3_stab} define a feedback law that drives \(\mu_i\) exponentially to zero on the feasible set $\mathcal{M}_{0}$ where the denominators are well defined. By construction, when \(\mu=0\) the correction terms \(-k_i\mu_i\) vanish and the law reduces to the invariance-enforcing controller obtained in the previous subsection.

Note that on the feasible submanifold $\mathcal{M}_{0}$, by construction, the controllers \eqref{eq:f_stab}--\eqref{eq:tau3_stab} are well-defined and the factors \(\rho^{-1}\), \(\rho^{-2}\), and \((e_3^\top b_3)^{-1}\) are non-vanishing.


\begin{proposition}
Assume that the closed-loop trajectory remains in the feasible submanifold \(\mathcal{M}_{0}\) for \(t\in[0,T]\). Under the feedback law \eqref{eq:f_stab}--\eqref{eq:tau3_stab}, the residuals \eqref{eq:mu_residuals} satisfy
\[
\dot\mu_i = -k_i\mu_i,\qquad i=1,2,3,
\]
on \([0,T]\), and therefore
\[
\mu_i(t)=\mu_i(0)e^{-k_i t}, \qquad i=1,2,3.
\]
In particular, the velocity-level task constraints are exponentially recovered on any interval for which the closed-loop solution remains in \(\mathcal{M}_{0}\).
\end{proposition}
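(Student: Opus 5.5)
The plan is a direct substitution argument, closed by solving a scalar linear ODE. We work on a maximal interval $[0,T]$ on which the closed-loop solution exists and stays in $\mathcal{M}_{0}$. On $\mathcal{M}_{0}$ we have $\rho>0$ and $e_3^\top b_3\neq 0$, so the right-hand sides of \eqref{eq:f_stab}--\eqref{eq:tau3_stab} are smooth functions of $(R,p,\Omega,v)$. The closed-loop vector field is therefore smooth there, and the residuals $\mu^i(t)$ in \eqref{eq:mu_residuals} are $C^1$ along the trajectory. We use the indexing of \eqref{eq:f_stab}--\eqref{eq:tau3_stab}, where the height residual $e_3^\top v$ is paired with $k_3$ in $f$, the $\Omega_2$-residual with $k_3$ in $\tau_2$, and the $\Omega_3$-residual with $k_2$ in $\tau_3$. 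So we first fix a consistent labelling of residuals and gains matching the feedback formulas.

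The key step is to recompute the residual derivatives \emph{off} the constraint set. The formulas \eqref{eq:vdot2-simpl}--\eqref{eq:vdot3-simpl} used the on-manifold substitutions $v_2=\rho\Omega_3$ and $v_3=-\rho\Omega_2$, so they cannot be invoked here. Instead we use the general expressions
\[
\dot v_2=-v_1\Omega_3+v_3\Omega_1-g s_2,\qquad \dot v_3=v_1\Omega_2-v_2\Omega_1+\tfrac{f}{m}-g s_3,
\]
which hold for any state because they use only $\dot b_i=b_i\times\Omega$ in body components and the translational dynamics. We also use $\dot\rho=\hat p^\top v$, without replacing $\hat p$ by $-b_1$, and Euler's equations with diagonal $J$ for $\dot\Omega_2$ and $\dot\Omega_3$. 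This reproduces \eqref{eq:mu3dot}--\eqref{eq:mu2dot}, now valid at every state in $\mathcal{M}_{0}$. One subtlety is the definition of $v_i=b_i^\top v$ and of $\dot b_i$. We take $b_i=Re_i$, so that $\dot b_i=R(\Omega\times e_i)$. The cross terms $v_3\Omega_1$, $-v_1\Omega_3$, and so on must then come out exactly as written, and we will check them in body coordinates.

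Next, each derivative is affine in exactly one input: $\dot\mu$ for the height residual involves only $f$, the $\Omega_3$-residual only $\tau_3$ (with $f$ absent), and the $\Omega_2$-residual involves $\tau_2$ and $f$. We substitute \eqref{eq:f_stab} into the first expression and use $e_3^\top b_3\neq0$ to get $-k\,\mu$ after cancellation. We then substitute \eqref{eq:tau2_stab}, with whatever value $f$ takes, into \eqref{eq:mu2dot}. Every term except $-k\mu$ cancels identically, because $\tau_2$ was built by solving \eqref{eq:mu2dot} $=-k\mu$ for $\tau_2$. Likewise \eqref{eq:tau3_stab} cancels in \eqref{eq:mu1dot}. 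The system is triangular, so there is no matrix inversion and no further transversality check beyond $s_3\neq 0$.

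Finally, each $\mu_i$ satisfies the scalar linear ODE $\dot\mu_i=-k_i\mu_i$ on $[0,T]$. Uniqueness of solutions, or simply differentiating $e^{k_it}\mu_i(t)$ and finding zero derivative, gives $\mu_i(t)=\mu_i(0)e^{-k_it}$, which is the claimed exponential recovery. The main obstacle is not the ODE step but the bookkeeping in the second paragraph. We must confirm that the published expressions for $\dot\mu^2$ and $\dot\mu^3$ hold off the constraint set, with correct signs of the gyroscopic and cross terms and $\dot\rho$ kept as $\hat p^\top v$. Any on-manifold simplification slipped into those formulas would break the exact cancellation, so we will redo that expansion carefully in the body basis.
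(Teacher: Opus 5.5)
Your proposal is correct and follows the paper's proof. The paper likewise substitutes the feedback laws into the residual derivatives, which are the general off-manifold expressions with $\dot\rho=\hat p^\top v$, as you rightly check. Each equation then reduces to $\dot\mu=-k\mu$, and integrating the scalar linear ODE gives the stated exponential decay. Your relabeling of residuals and gains is also needed: as printed, the $f$-law and the $\tau_2$-law both use $k_3\mu_3$ to refer to two different residuals.
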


\begin{proof}
Substituting \eqref{eq:f_stab}, \eqref{eq:tau2_stab}, and \eqref{eq:tau3_stab} into \eqref{eq:mu1dot}--\eqref{eq:mu3dot} gives \eqref{eq:target_mu_dynamics} identically. The explicit solution follows immediately. 
\end{proof}

\begin{remark}
The stabilization above acts on the \emph{velocity-level} compatibility conditions associated with the pointing and altitude task. Hence it guarantees exponential convergence of the residuals \eqref{eq:mu_residuals} while the trajectory remains in the feasible manifold $\mathcal{M}_{0}$. Convergence to the full configuration-level task manifold may additionally require task-specific geometric conditions and suitable initialization of the holonomic part.
\end{remark}

\section{Simulation Results}\label{sec:simulations}

In this section, we present numerical simulations illustrating the closed-loop behavior of the quadrotor under the proposed virtual nonholonomic constraints for body-axis pointing and fixed altitude. The simulations are designed to assess two aspects: (i) preservation of the task geometry under the invariance-enforcing law (and its practical behavior under finite-step numerical integration), and (ii) the practical effect of the residual-stabilization mechanism of Section~III-D in a controlled off-manifold setting.

Unless otherwise stated, the numerical experiments use the parameter values $(m,g,J_1,J_2,J_3)=(1,9.8,2,2,6)$, with diagonal inertia \(J=\mathrm{diag}(J_1,J_2,J_3)\). The closed-loop dynamics are integrated with a fourth-order Runge--Kutta method. In the results below, we report configuration-level task errors,
\[
e_{\mathrm{pt}}(t)=\|b_1(t)+\hat{p}(t)\|,
\qquad
e_z(t)=|e_3^\top p(t)-z_0|,
\]
together with velocity-level residuals \(\mu_i(t)\), \(i=1,2,3\), defined in \eqref{eq:mu_residuals}, and selected commanded inputs \(f(t),\tau_2(t),\tau_3(t)\), when relevant.

We consider two initialization regimes. In the first (on-manifold) regime, the initial condition satisfies the pointing and altitude constraints together with the corresponding velocity-level compatibility conditions. In the second (structured off-manifold) regime, we perturb only the vertical residual channel \(\mu_3\) while keeping the remaining residuals close to their compatible values. This second regime is used to isolate and visualize the action of the residual-stabilization mechanism in Section~III-D, without masking effects from the more strongly coupled lateral residual dynamics.

In Fig.~\ref{fig:traj_3d_pointing} we show a geometric illustration of the pointing--altitude task. The target is the origin \(p_\star=(0,0,0)\), and the prescribed flight plane is the horizontal plane \(z=z_0\) with \(z_0=0.58\). The displayed trajectory is a representative circular arc on that plane, parameterized with radius \(R_{\mathrm{traj}}=0.9\) and angular span \(\theta\in[20^\circ,150^\circ]\). Along the curve, body-axis snapshots are shown, where the red axis corresponds to \(b_1\), the green axis to \(b_2\), and the blue axis to \(b_3\). The frame is constructed so that \(b_1=-\hat{p}\), with \(\hat{p}=p/\|p\|\), i.e., the red axis is aligned with the line-of-sight from the UAV to the target (black point), while the motion remains on the constant-altitude plane.

\begin{figure}[h!]
    \centering
    \includegraphics[width=0.99\linewidth]{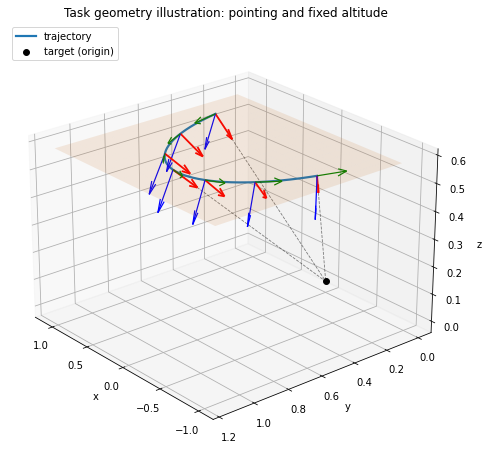}
    \caption{Illustration of the pointing--altitude task geometry. The red body axis \(b_1\) is aligned with the line-of-sight to the target at the origin, while the motion evolves on the prescribed horizontal plane \(z=z_0\).}
    \label{fig:traj_3d_pointing}
\end{figure}

We first simulate the invariance-enforcing law (equivalently, the residual dynamics with \(k_1=k_2=k_3=0\)) from an on-manifold initialization, without numerical projection of the task constraints. Figure~\ref{fig:onmanifold_config_errors} shows the configuration-level task errors \(e_{\mathrm{pt}}(t)\) and \(e_z(t)\). The altitude error remains at numerical zero, whereas a gradual drift appears in the pointing error. This behavior is consistent with numerical drift in the velocity-level compatibility conditions under finite-step integration, and motivates the residual-stabilization extension of Section~III-D for practical simulations.

\begin{figure}[h!]
    \centering
    \includegraphics[width=0.92\linewidth]{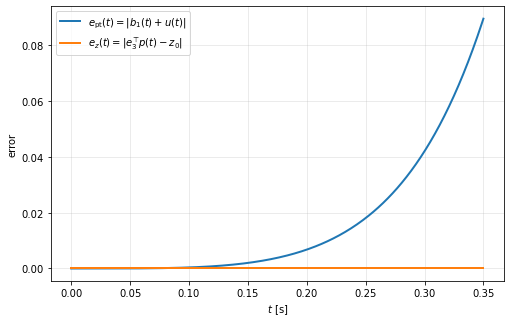}
    \caption{Pure invariance law (\(k_1=k_2=k_3=0\)) with on-manifold initialization and no projection: configuration-level task errors \(e_{\mathrm{pt}}(t)\) and \(e_z(t)\).}
    \label{fig:onmanifold_config_errors}
\end{figure}

To illustrate the effect of residual stabilization in a clean and interpretable way, we next consider a structured off-manifold initialization in which only the linear (vertical) residual is perturbed, i.e., \(\mu_3(0)=e_3^\top v(0)\neq 0\), while \(\mu_1(0)\) and \(\mu_2(0)\) remain close to their compatible values. We compare the pure invariance law (\(k_1=k_2=k_3=0\)) with a linear (vertical) residual-damping law (\(k_1=k_2=0\), \(k_3=5\)) without numerical projection. This experiment is not intended as a full multivariable tuning study of the residual dynamics; rather, it isolates one residual channel to demonstrate the practical effect of the proposed stabilization mechanism in a regularity-preserving regime.

Figure~\ref{fig:vertical_ez_compare} shows the altitude error \(e_z(t)\) for both cases. The vertical residual damping law reduces the altitude drift relative to the pure invariance law. Figure~\ref{fig:vertical_mu3_compare} shows the corresponding evolution of \(\mu_3(t)\): under the pure invariance law, \(\mu_3\) remains approximately constant, whereas with \(k_3=5\) it decays significantly over the simulated interval. The associated thrust input histories in Fig.~\ref{fig:vertical_f_compare} remain bounded and comparable in magnitude, indicating that the improvement is achieved without excessive control effort. In all reported runs, the trajectory remains within the regularity-preserving regime used in the numerical implementation.

\begin{figure}[t]
    \centering
    \includegraphics[width=0.92\linewidth]{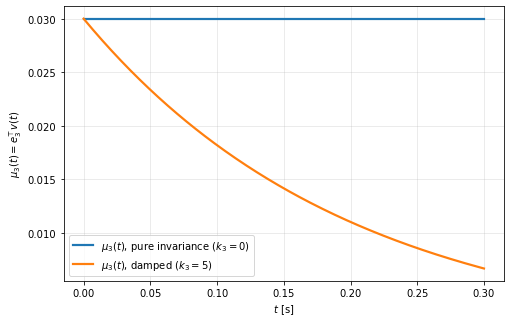}
    \caption{Comparison of the altitude error \(e_z(t)=|e_3^\top p(t)-z_0|\) for a structured off-manifold initialization with \(\mu_3(0)\neq0\): pure invariance law (\(k_3=0\)) versus vertical residual damping (\(k_3=5\), with \(k_1=k_2=0\)). Damping \(\mu_3\) reduces altitude drift over the simulated interval.}
    \label{fig:vertical_ez_compare}
\end{figure}

\begin{figure}[t]
    \centering
    \includegraphics[width=0.92\linewidth]{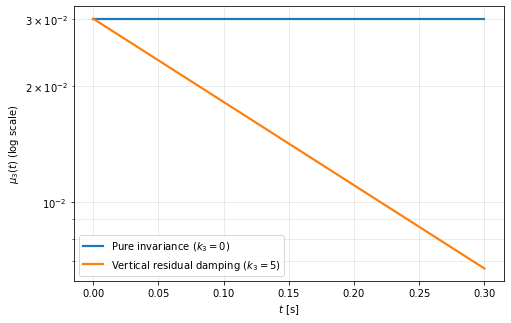}
    \caption{Comparison of the vertical residual \(\mu_3(t)=e_3^\top v(t)\) for the same experiment as in Fig.~\ref{fig:vertical_ez_compare}, shown on a logarithmic scale.}
    \label{fig:vertical_mu3_compare}
\end{figure}

\begin{figure}[t]
    \centering
    \includegraphics[width=0.92\linewidth]{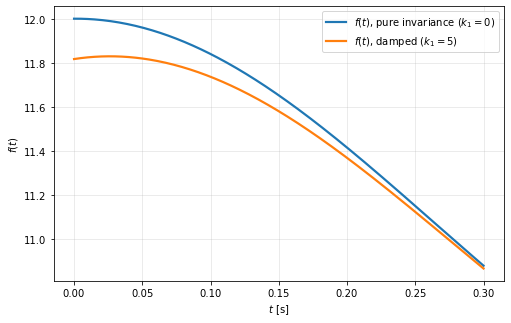}
    \caption{Thrust input \(f(t)\) for the structured vertical-residual experiment. The input remains bounded in both cases and has comparable magnitude, indicating that the reduction in altitude drift is achieved without excessive control effort.}
    \label{fig:vertical_f_compare}
\end{figure}

Finally, to connect the numerical observations with the assumptions underlying the control law, we monitor the regularity indicators that appear in the denominators and in the local validity conditions. Figure~\ref{fig:regularity_monitor_vertical} reports \(|e_3^\top b_3(t)|\) and \(\rho(t)=\|p(t)\|\) for the structured vertical-residual experiment, for both the pure invariance law and the vertically damped case. In both cases, the trajectories remain well separated from the chosen regularity thresholds throughout the simulated interval, which is consistent with the well-posedness of the computed inputs in this local regime.

For completeness, Fig.~\ref{fig:tau_inputs_vertical_case} shows the torque channels \(\tau_2(t)\) and \(\tau_3(t)\) for the vertically damped case (\(k_1=k_2=0\), \(k_3=5\)). The torque inputs remain bounded and smooth, with \(\tau_2\) carrying the dominant actuation effort and \(\tau_3\) remaining comparatively small in magnitude for this experiment. Together with Figs.~\ref{fig:vertical_ez_compare}--\ref{fig:vertical_f_compare}, these results support the practical viability of the proposed invariance/stabilization construction in the regularity-preserving regime considered here.

\begin{figure}[t]
    \centering
    \includegraphics[width=0.92\linewidth]{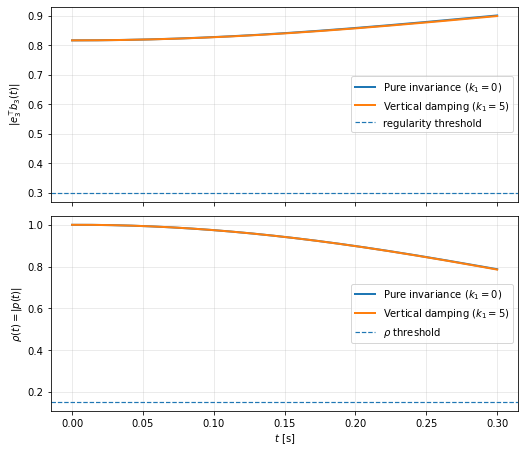}
    \caption{Regularity monitors for the structured vertical-residual experiment (same setup as Figs.~\ref{fig:vertical_ez_compare}--\ref{fig:vertical_f_compare}): top, \(|e_3^\top b_3(t)|\); bottom, \(\rho(t)=\|p(t)\|\). The dashed lines indicate the numerical regularity thresholds used in the simulations.}
    \label{fig:regularity_monitor_vertical}
\end{figure}

\begin{figure}[t]
    \centering
    \includegraphics[width=0.92\linewidth]{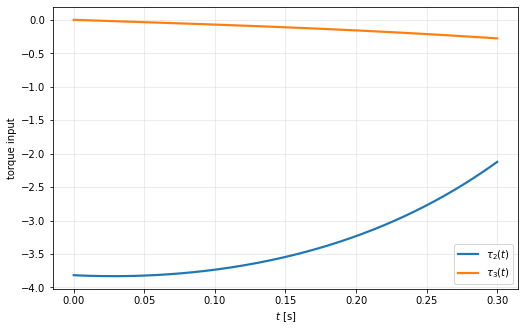}
    \caption{Torque inputs \(\tau_2(t)\) and \(\tau_3(t)\) for the vertically damped structured off-manifold experiment (\(k_1=k_2=0\), \(k_3=5\)). The torque channels remain bounded and smooth over the simulated interval, consistent with operation inside the regularity regime shown in Fig.~\ref{fig:regularity_monitor_vertical}.}
    \label{fig:tau_inputs_vertical_case}
\end{figure}

\section{Conclusions and Future Work}\label{sec:conclusions}

We presented a geometric control framework on $SE(3)$ for quadrotors that achieves pointing-driven missions by construction. The mission is encoded as a virtual nonholonomic constraint distribution (velocity level), whose invariance is enforced by differentiating the constraints along the dynamics and solving the resulting linear system in a selected set of control inputs. Under a transversality condition between the task distribution and the effective actuation distribution, the system is nonsingular and yields a uniquely defined invariance-enforcing feedback law, making feasibility and well-posedness explicit in geometric terms. As a representative application, we considered the combined task of locking a $\tau_1=0$, obtaining closed-form expressions for the thrust and remaining torques that expose the translation–rotation coupling induced by the thrust direction. In addition, we derived a local off-manifold stabilization extension that acts directly on the velocity-level compatibility residuals. On the regularity-preserving set where the denominators are well-defined, this extension enforces exponentially stable first-order residual dynamics and recovers the invariance-enforcing law on the constraint set, providing a practical mechanism to mitigate drift due to initialization mismatch or numerical integration.

Several next steps are worth pursuing. First, we plan to better characterize how the stabilization gains affect transient behavior and robustness in practice. Second, we will apply the same methodology to a broader set of pointing-driven missions, beyond the specific line-of-sight and altitude task considered here. Finally, we aim to validate the approach in experiments on a quadrotor platform to assess performance under sensor noise, disturbances, and actuation limits. Future work also include evaluating the approach in scenarios with additional onboard pointing actuation (e.g., gimbals) to clarify when enforcing vehicle-level geometry is beneficial versus delegating pointing to the payload, and extending the formulation to other underactuated or input-restricted platforms and tasks where not all degrees of freedom are controllable.
\bibliographystyle{IEEEtran}
\bibliography{autosam}



%

\end{document}